\newfont{\bbb}{msbm10 scaled 500}
\newfont{\bb}{msbm10 scaled 1100}
\newcommand{\Prob}{\textrm{Pr}}
\newcommand{\Ac}{{\cal A}}
\newcommand{\Bc}{{\cal B}}
\newcommand{\Cc}{{\cal C}}
\newcommand{\Fc}{{\cal F}}
\newcommand{\Gc}{{\cal G}}
\newcommand{\Mc}{{\cal M}}
\newcommand{\Xc}{{\cal X}}
\newcommand{\Yc}{{\cal Y}}
\newtheorem{theorem}{Theorem}
\newtheorem{corollary}[theorem]{Corollary}%[chapter]
\newtheorem{lemma}[theorem]{Lemma}%[chapter]
\newtheorem{definition}[theorem]{Definition}%[chapter]
\title{Polar Coding for Secure Transmission and Key Agreement}
\author{
\IEEEauthorblockN{O.~Ozan~Koyluoglu and Hesham El Gamal\\}
\IEEEauthorblockA{Department of Electrical and Computer Engineering\\
The Ohio State University\\
Columbus, OH 43210}
\thanks{This work is partially supported by Los Alamos National Labs (LANL)
and by National Science Foundation (NSF).
The first author is partially supported by the Presidential Fellowship
award of the Ohio State University.}
\thanks{This work will appear in Proc. 21st Annual IEEE
International Symposium on Personal, Indoor, and  Mobile Radio
Communications (PIMRC), Sept. 2010, Istanbul, Turkey.}
}
\begin{document}
\maketitle

%%%%%%%%%%%%%%%%%%%%%%%%%%%%%%%%%%%%%%%%%%%%%%%%%%%%%%%%%%%%%%%%%%%%%%%%%%%%%%
%%%%%%%%%%%%%%%%%%%%%%%%%%%%%%%%%%%%%%%%%%%%%%%%%%%%%%%%%%%%%%%%%%%%%%%%%%%%%%

\begin{abstract}
Wyner's work on wiretap channels and the recent works on information
theoretic security are based on random codes. Achieving information
theoretical security with practical coding schemes is of definite
interest. In this note, the attempt is to overcome this elusive task
by employing the polar coding technique of Ar{\i}kan. It is shown that
polar codes achieve non-trivial perfect secrecy rates for binary-input
degraded wiretap channels while enjoying their low encoding-decoding complexity.
In the special case of symmetric main and eavesdropper channels,
this coding technique achieves the secrecy capacity.
Next, fading erasure wiretap channels are considered and
a secret key agreement scheme is proposed, which requires only the
statistical knowledge of the eavesdropper channel state
information (CSI). The enabling factor is the creation of advantage
over Eve, by blindly using the proposed
scheme over each fading block, which is then exploited with
privacy amplification techniques to generate secret keys.
\end{abstract}

%%%%%%%%%%%%%%%%%%%%%%%%%%%%%%%%%%%%%%%%%%%%%%%%%%%%%%%%%%%%%%%%%%%%%%%%%%%%%%
%%%%%%%%%%%%%%%%%%%%%%%%%%%%%%%%%%%%%%%%%%%%%%%%%%%%%%%%%%%%%%%%%%%%%%%%%%%%%%

\section{Introduction}
\label{sec:Introduction}
The notion of information theoretic secrecy was introduced by
Shannon to study secure communication over point-to-point noiseless
channels~\cite{Shannon:Communication49}. This line of work
was later extended by Wyner~\cite{Wyner:The75} to noisy channels.
Wyner's degraded wiretap channel assumes that the eavesdropper
channel is a degraded version of the one seen by the legitimate
receiver. Under this assumption, Wyner showed that the
advantage of the main channel over that of the eavesdropper,
in terms of the lower noise level,
can be exploited to transmit secret bits using random codes.
This \emph{keyless secrecy} result was then extended to a more
general (broadcast) model in~\cite{Csiszar:Broadcast78} and to
the Gaussian setting in~\cite{Leung-Yan-Cheong:The78}. Recently,
there has been a renewed interest in wireless physical layer
security (see, e.g., Special Issue on Information
Theoretic Security, \emph{IEEE Trans. Inf. Theory}, June 2008
and references therein). However, designing practical codes
to achieve secrecy for any given main and eavesdropper channels
remained as an elusive task.

In~\cite{Thangaraj:Application07}, the authors constructed
LDPC based wiretap codes for certain binary erasure channel (BEC)
and binary symmetric channel (BSC) scenarios. In particular, when
the main channel is noiseless and the eavesdropper channel is
a BEC,~\cite{Thangaraj:Application07} presented codes that
approach secrecy capacity. For other scenarios, secrecy
capacity achieving code design is stated as an open problem.
Similarly,~\cite{Liu:Secure07} considers the design of secure
nested codes for the noiseless main channel setting (see
also~\cite{Liang:Information08}).

This work considers secret communication over a binary-input degraded
wiretap channel. Using the polar
coding technique of Ar{\i}kan~\cite{Arikan:Channel09},
we show that non-trivial secrecy rates are achievable.
According to our best knowledge, this coding technique is the
first provable and practical (having low encoding
and decoding complexity) secrecy encoding technique for this
set of channels.
In the special case of the symmetric main and eavesdropper channels,
this technique achieves the secrecy capacity of the channel
\footnote{We acknowledge that the concurrent work~\cite{Hessam:Achieving}
independently established the result that polar codes can achieve
the secrecy capacity of the degraded wiretap channels, when both
main and eavesdropper channels are binary-input and symmetric
(Corollary~\ref{thm:cor} of this note).}.
Next, we consider fading wiretap channels and propose a key
agreement scheme where the users only assumed to have the
statistical knowledge of the eavesdropper CSI.
The enabling observation is that by blindly using the
scheme over many fading blocks, the users will eventually create
an advantage over Eve, which can then be exploited to generate
secret keys using privacy amplification techniques.

%%%%%%%%%%%%%%%%%%%%%%%%%%%%%%%%%%%%%%%%%%%%%%%%%%%%%%%%%%%%%%%%%%%%%%%%%%%%%%
%%%%%%%%%%%%%%%%%%%%%%%%%%%%%%%%%%%%%%%%%%%%%%%%%%%%%%%%%%%%%%%%%%%%%%%%%%%%%%

\section{Notations}
\label{sec:Notations}

Throughout this paper, vectors are denoted by $x_1^N=\{x_1,\cdots,x_N\}$
or by $\bar{x}$ if we omit the indices.
Random variables are denoted with capital letters $X$, which
are defined over sets denoted by the calligraphic letters $\Xc$.
For a given set $\Ac\subset\{1,\cdots,N\}$, we write $x_{\Ac}$
to denote the sub-vector $\{x_i:i\in\Ac\}$. Omitting the
random variables, we use the following
shorthand for probability distributions
$p(x)\triangleq\Prob(X=x)$, $p(x|y)\triangleq\Prob(X=x|Y=y)$.

%%%%%%%%%%%%%%%%%%%%%%%%%%%%%%%%%%%%%%%%%%%%%%%%%%%%%%%%%%%%%%%%%%%%%%%%%%%%%%
%%%%%%%%%%%%%%%%%%%%%%%%%%%%%%%%%%%%%%%%%%%%%%%%%%%%%%%%%%%%%%%%%%%%%%%%%%%%%%

\section{Polar Codes}
\label{sec:PolarCodes}
Consider a binary-input DMC (B-DMC) given by $W(y|x)$, where $x\in\Xc=\{0,1\}$
and $y\in\Yc$ for some finite set $\Yc$. The $N$ uses of $W$ is
denoted by $W^N(y_1^N|x_1^N)$.
The symmetric capacity of a B-DMC $W$ is given by
\begin{eqnarray}
I(W) \triangleq \sum\limits_{x\in\Xc} \sum\limits_{y\in\Yc}
\frac{1}{2} W(y|x) \log_2\left(
\frac{W(y|x)}{\sum\limits_{x'\in\Xc} \frac{1}{2}W(y|x')}
\right),
\end{eqnarray}
which is the mutual information $I(X;Y)$ when the input $X$ is
uniformly distributed.
The Bhattacharyya parameter of $W$ is given by
\begin{eqnarray}
Z(W) \triangleq \sum\limits_{y\in\Yc} \sqrt{W(y|0)W(y|1)},
\end{eqnarray}
which measures the reliability of $W$ as it is an upper bound
on the probability of ML decision error on a single use
of the channel.

Polar codes is recently introduced by Ar{\i}kan~\cite{Arikan:Channel09}.
These codes can be encoded and decoded with complexity
$O(N\log N)$, while achieving an overall block-error
probability that is bounded as $O(2^{-N^\beta})$ for any
fixed $\beta<\frac{1}{2}$ (\cite{Arikan:Channel09},~\cite{Arikan:On09}).
In~\cite{Arikan:Channel09}, channel polarization is used to
construct codes (polar codes) that can achieve the symmetric
capacity, $I(W)$, of any given B-DMC $W$.
Channel polarization consists of two operations:
Channel combining and channel splitting. Let $u_1^N$ be the
vector to be transmitted. The combined channel is represented
by $W_N$ and is given by
\begin{eqnarray}
W_N(y_1^N|u_1^N) = W^N (y_1^N|u_1^N B_N F^{\otimes n}),
\end{eqnarray}
where $B_N$ is a bit-reversal permutation matrix, $N=2^n$, and
$F\triangleq \left(
               \begin{array}{cc}
                 1 & 0 \\
                 1 & 1 \\
               \end{array}
             \right)
$. Note that the actual channel input here is
given by $x_1^N=u_1^N B_N F^{\otimes n}$.
The channel splitting constructs $N$ binary input channels
from $W_N$, where the transformation is given by
\begin{eqnarray}
W_N^{(i)} (y_1^N,u_1^{i-1}|u_i) \triangleq
\sum\limits_{u_{i+1}^N \in \Xc^{N-i}}
\frac{1}{2^{N-1}} W_N(y_1^N|u_1^N).
\end{eqnarray}

The polarization phenomenon is shown by the following theorem.
\begin{theorem}[Theorem 1 of~\cite{Arikan:Channel09}]
For any B-DMC $W$, $N=2^n$ for some $n$, and $\delta\in(0,1)$,
we have
$$\lim\limits_{N\to\infty} \frac{|\{i\in\{1,\cdots,N\}:
I(W_N^{(i)})\in(1-\delta,1]\}|
}
{N}= I(W),$$
$$\lim\limits_{N\to\infty} \frac{|\{i\in\{1,\cdots,N\}:
I(W_N^{(i)})\in[0,\delta)\}|
}
{N}= 1-I(W).$$
\end{theorem}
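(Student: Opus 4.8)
The plan is to recast the deterministic construction of the channels $W_N^{(i)}$ as a random walk on a binary tree and then invoke martingale convergence. First I would isolate the single-step recursion underlying channel splitting: combining two independent copies of a channel $V$ produces a \emph{bad} channel $V^-$ and a \emph{good} channel $V^+$, and the family $\{W_N^{(i)}\}_{i=1}^N$ is exactly the set of $2^n$ channels obtained by applying $n$ such transforms along every root-to-leaf path of a depth-$n$ binary tree. Concretely one verifies $W_{2N}^{(2i-1)}=(W_N^{(i)})^-$ and $W_{2N}^{(2i)}=(W_N^{(i)})^+$, so that selecting the $n$ branch directions by i.i.d.\ fair coin flips $B_1,\dots,B_n$ makes the resulting channel $W_n$ uniformly distributed over $\{W_N^{(i)}\}_{i=1}^N$.

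Next I would track two functionals along this process, $I_n \triangleq I(W_n)$ and $Z_n \triangleq Z(W_n)$. The analytic inputs are the \emph{conservation} identity $I(V^-)+I(V^+)=2I(V)$ together with the Bhattacharyya recursions $Z(V^+)=Z(V)^2$ and $Z(V)\le Z(V^-)\le 2Z(V)-Z(V)^2$. Conservation makes $\{I_n\}$ a bounded martingale, so by the martingale convergence theorem $I_n\to I_\infty$ almost surely and in $L^1$; in particular $\EE[I_\infty]=I_0=I(W)$. The Bhattacharyya recursions make $\{Z_n\}$ a bounded supermartingale, since its one-step conditional mean is at most $\tfrac12\big((2Z_{n-1}-Z_{n-1}^2)+Z_{n-1}^2\big)=Z_{n-1}$, hence $Z_n\to Z_\infty$ almost surely and in $L^1$ as well.

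The hard part, and the real content of the theorem, is showing that the limit channel is \emph{fully polarized}, i.e.\ $Z_\infty\in\{0,1\}$ almost surely. Here I would exploit the $L^1$ convergence: $\EE[|Z_n-Z_{n-1}|]\to 0$, while on the event $B_n=+$ (probability $\tfrac12$, independent of the history) one has $|Z_n-Z_{n-1}|=Z_{n-1}-Z_{n-1}^2=Z_{n-1}(1-Z_{n-1})$. This forces $\EE[Z_{n-1}(1-Z_{n-1})]\to 0$, and bounded convergence then gives $\EE[Z_\infty(1-Z_\infty)]=0$, so $Z_\infty(1-Z_\infty)=0$ almost surely. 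Finally I would invoke the standard two-sided bounds relating $I$ and $Z$ (so that $Z(V)\to 0\Rightarrow I(V)\to 1$ and $Z(V)\to 1\Rightarrow I(V)\to 0$) to conclude $I_\infty\in\{0,1\}$, with $I_\infty=1$ exactly when $Z_\infty=0$.

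The counting statements then follow at once. Since $I_\infty\in\{0,1\}$ we have $\EE[I_\infty]=\Prob(I_\infty=1)$, and the martingale identity gives $\Prob(I_\infty=1)=I(W)$. Because the fraction $|\{i:I(W_N^{(i)})\in(1-\delta,1]\}|/N$ equals $\Prob(I_n\in(1-\delta,1])$, and the indicator $\mathbf{1}[I_n\in(1-\delta,1]]$ converges almost surely to $\mathbf{1}[I_\infty=1]$ (using $I_\infty\in\{0,1\}$), bounded convergence yields the first limit $I(W)$; the second limit follows identically from $\Prob(I_\infty=0)=1-I(W)$.
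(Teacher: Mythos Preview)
The paper does not prove this statement; it is quoted verbatim as Theorem~1 of Ar{\i}kan's polarization paper~\cite{Arikan:Channel09} and used as a black box for the subsequent wiretap construction. Your proposal is a faithful reconstruction of Ar{\i}kan's own argument in that reference --- the tree-indexed coupling that makes $W_n$ uniform over $\{W_N^{(i)}\}$, the bounded martingale $I_n$ and supermartingale $Z_n$, the $Z_\infty\in\{0,1\}$ step via $\EE[Z_{n-1}(1-Z_{n-1})]\to 0$, and the identification $\Prob(I_\infty=1)=\EE[I_\infty]=I(W)$ --- so it is both correct and essentially the same approach as the proof in the cited source.
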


In order to derive the rate of the channel polarization,
the random process $Z_n$ is defined in~\cite{Arikan:Channel09}
and in~\cite{Arikan:On09}. Basically,
\begin{eqnarray}
\Prob\{Z_n\in(a,b)\} = \frac{
|\{i\in\{1,\cdots,N\}:Z(W_{2^n}^{(i)})\in(a,b)\}|
}{N}
\end{eqnarray}
The rate of the channel polarization is given
by the following.
\begin{theorem}[Theorem 1 of~\cite{Arikan:On09}]
For any B-DMC $W$ and for any given $\beta<\frac{1}{2}$,
$$\lim\limits_{n\to\infty} \Prob\{Z_n < 2^{-{2^n}^{\beta}} \} = I(W).$$
\end{theorem}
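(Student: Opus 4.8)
The plan is to realize $\{Z_n\}$ as a multiplicative random process and analyze it in two regimes. First I would record Ar{\i}kan's one-step Bhattacharyya recursions, $Z(W^{+})=Z(W)^2$ and $Z(W^{-})\le 2Z(W)-Z(W)^2\le 2Z(W)$, where $W^{+},W^{-}$ are the two channels produced by a single polarization step, each arising with probability $\tfrac12$ in the definition of $Z_n$. Thus, conditioned on the past, $Z_{n+1}=Z_n^2$ with probability $\tfrac12$ (a ``$+$''-step) and $Z_{n+1}\le 2Z_n$ with probability $\tfrac12$ (a ``$-$''-step). Before touching the rate I would establish that $Z_n\to Z_\infty$ almost surely with $Z_\infty\in\{0,1\}$ and $\Prob\{Z_\infty=0\}=I(W)$: the symmetric-capacity process is a bounded martingale (since $I(W^{-})+I(W^{+})=2I(W)$), hence converges a.s.\ to some $I_\infty$ with $\EE[I_\infty]=I(W)$; Theorem~1 (polarization) forces $I_\infty\in\{0,1\}$, and the standard relations between $I(W)$ and $Z(W)$ from \cite{Arikan:Channel09} (which give $Z(W)\to0\iff I(W)\to1$) then identify $\{I_\infty=1\}=\{Z_\infty=0\}$.

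The upper bound $\limsup_n \Prob\{Z_n<2^{-2^{n\beta}}\}\le I(W)$ is the easy half. On $\{Z_\infty=1\}$ we have $Z_n\to1$, so the indicator of $\{Z_n<2^{-2^{n\beta}}\}$ tends to $0$ there; since $2^{-2^{n\beta}}\to0$, dominated convergence yields $\limsup_n \Prob\{Z_n<2^{-2^{n\beta}}\}\le \Prob\{Z_\infty=0\}=I(W)$.

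The heart of the matter is the matching lower bound $\liminf_n \Prob\{Z_n<2^{-2^{n\beta}}\}\ge I(W)$. I would pass to $a_n:=-\log_2 Z_n$, for which the recursion reads $a_{n+1}=2a_n$ on a $+$-step and $a_{n+1}\ge a_n-1$ on a $-$-step, and split the evolution into a burn-in phase and an acceleration phase. Fix $\beta<\tfrac12$ and $\epsilon>0$. Because $Z_m\to0$ on the good event, for a large constant $c_0$ and all $m$ large enough one has $\Prob\{a_m\ge c_0\}\ge I(W)-\epsilon$. Unrolling the recursion over a window $(m,n]$ that contains $s$ plus-steps gives the key inequality
\[ a_n\ \ge\ 2^{s}a_m-\sum_{j:\,-\text{step}}2^{r_j}, \]
where $r_j$ is the number of plus-steps occurring after the $j$-th minus-step; equivalently $a_n\ge 2^{s}(a_m-C)$ with $C=\sum_{k\ge0}G_k 2^{-k}$, where $G_k$ counts the minus-steps having exactly $k$ plus-steps before them.

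Two probabilistic facts then close the argument. By a Chernoff bound the number of plus-steps obeys $s\ge(\tfrac12-\delta)(n-m)$ with probability tending to $1$, so $2^{s}\ge 2^{\beta n}$ for $n$ large, provided $\delta$ is chosen with $\beta<\tfrac12-\delta$. And since the inter-plus gaps are i.i.d.\ geometric, $C$ is dominated by $C_\infty=\sum_{k\ge0}G_k2^{-k}$ with $\EE[C_\infty]=2$ and exponential tails, so $\Prob\{C>c_0-1\}<\epsilon$ once $c_0$ is large. On the intersection of these events, which is essentially independent since $a_m$ and $C$ depend on disjoint blocks of coin flips, we obtain $a_n\ge 2^{s}(a_m-C)\ge 2^{s}\ge 2^{\beta n}$, i.e.\ $Z_n<2^{-2^{n\beta}}$, with probability at least $I(W)-O(\epsilon)$; letting $\epsilon\to0$ finishes the proof. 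I expect the acceleration estimate to be the main obstacle: a $-$-step early in the window is amplified by \emph{every} subsequent squaring, so the accumulated advantage $a_m$ must be shown to dominate the geometric series of these amplified penalties. This is precisely what forces the use of the random (rather than worst-case) ordering of $\pm$-steps together with the tail bound on $C$, and it is also where the ceiling $\beta<\tfrac12$ originates, since the growth exponent $2^{s}$ is controlled by the concentration of $s/(n-m)$ about $\tfrac12$.
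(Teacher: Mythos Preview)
The paper does not prove this statement; it is quoted as Theorem~1 of \cite{Arikan:On09} and used as a black box in the remainder of the paper, so there is no in-paper proof to compare your proposal against. For what it is worth, your outline is precisely the two-phase (burn-in plus acceleration) argument of Ar{\i}kan and Telatar: the one-step Bhattacharyya recursions, the bounded-martingale convergence of $I_n$ to identify $\Prob\{Z_\infty=0\}=I(W)$, the passage to $a_n=-\log_2 Z_n$, the Chernoff bound on the number $s$ of squarings in the window $(m,n]$, and the geometric-tail control of the amplified minus-step penalties $C=\sum_k G_k 2^{-k}$ are exactly the ingredients of the original proof, and your identification of the $\beta<\tfrac12$ barrier with the concentration of $s/(n-m)$ around $\tfrac12$ is the right intuition.
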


Now, the idea of polar coding is clear. The encoder-decoder
pair, utilizing the polarization effect, will transmit
data through the subchannels for which $Z(W_{N}^{(i)})$
is near $0$.
In~\cite{Arikan:Channel09}, the polar code
$(N,K,\Ac,u_{\Ac^c})$ for B-DMC $W$ is defined by
$x_1^N=u_1^N B_N F^{\otimes n}$, where $u_{\Ac^c}$
is a given frozen vector, and the information set $\Ac$
is chosen such that $|\Ac|=K$ and
$Z(W_{N}^{(i)})<Z(W_{N}^{(j)})$ for all $i\in \Ac$,
$j\in \Ac^c$. The frozen vector $u_{\Ac^c}$ is given
to the decoder. Ar{\i}kan's successive cancellation (SC)
estimates the input as follows:
For the frozen indices $\hat{u}_{\Ac^c}=u_{\Ac^c}$.
For the remaining indices s.t. $i\in\Ac$;
$\hat{u}_i=0$, if
$W_N^{(i)} (y_1^N,\hat{u}_1^{i-1}|0)
\geq W_N^{(i)} (y_1^N,\hat{u}_1^{i-1}|1)$ and
$\hat{u}_i=1$, otherwise.
With this decoder, it is shown in~\cite{Arikan:Channel09}
that the average block
error probability over the ensemble
(consisting of all possible frozen vector choices)
of polar codes is bounded by
$$P_e(N)\leq \sum\limits_{i\in\Ac} Z(W_{N}^{(i)}).$$

We now state the result of~\cite{Arikan:Channel09}
using the bound given in~\cite{Arikan:On09}.

\begin{theorem}[Theorem 2 of~\cite{Arikan:On09}]
For any given B-DMC $W$ with $I(W)>0$, let $R<I(W)$
and $\beta\in(0,\frac{1}{2})$ be fixed. Block error probability
for polar coding under SC decoding (averaged over
possible choices of frozen vectors) satisfies
$$P_e(N)= O(2^{-N^\beta}).$$
\end{theorem}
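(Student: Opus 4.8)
The plan is to combine the two facts already available in the excerpt: the successive-cancellation union bound $P_e(N)\leq\sum_{i\in\Ac}Z(W_N^{(i)})$, and the polarization-rate estimate (Theorem 1 of~\cite{Arikan:On09}), which says that the fraction of synthetic channels whose Bhattacharyya parameter falls below a prescribed sub-exponential threshold tends to $I(W)$. The whole argument is then a matter of choosing the information set $\Ac$ so that every index in it is ``good,'' and bounding the finite sum.

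First I would fix the information set. Since $\beta<\frac{1}{2}$, pick an auxiliary exponent $\beta'\in(\beta,\frac{1}{2})$, and let $\epsilon=\tfrac{1}{2}(I(W)-R)>0$. Applying the polarization-rate theorem with exponent $\beta'$, the fraction of indices $i$ with $Z(W_N^{(i)})<2^{-N^{\beta'}}$ converges to $I(W)$, so for all $N$ large enough this fraction exceeds $I(W)-\epsilon>R$. Consequently there are at least $\lceil NR\rceil$ such indices, and because the polar-code construction places the $K=\lceil NR\rceil$ smallest Bhattacharyya parameters into $\Ac$ (recall $Z(W_N^{(i)})<Z(W_N^{(j)})$ for $i\in\Ac$, $j\in\Ac^c$), every $i\in\Ac$ satisfies $Z(W_N^{(i)})<2^{-N^{\beta'}}$.

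Second, I would substitute this into the union bound, giving
$$P_e(N)\;\leq\;\sum_{i\in\Ac}Z(W_N^{(i)})\;\leq\;K\cdot 2^{-N^{\beta'}}\;\leq\;N\cdot 2^{-N^{\beta'}}.$$
Finally, to reach the claimed $O(2^{-N^\beta})$ I would absorb the polynomial factor: write $N\cdot 2^{-N^{\beta'}}=2^{\log_2 N-N^{\beta'}}$, and observe that since $\beta'>\beta$ we have $N^{\beta'}-\log_2 N\geq N^{\beta}$ for all sufficiently large $N$. Hence $P_e(N)\leq 2^{-N^\beta}$ eventually, i.e. $P_e(N)=O(2^{-N^\beta})$.

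The main obstacle is precisely the factor of $N$ (more precisely $K$) coming from the number of summands in the union bound: a direct application of the polarization theorem at exponent $\beta$ would only yield $N\cdot 2^{-N^\beta}$, which is \emph{not} $O(2^{-N^\beta})$. The one genuinely non-mechanical idea is to invoke the polarization-rate theorem at a strictly larger exponent $\beta'$ and use the slack $2^{-(N^{\beta'}-N^\beta)}$ to dominate the polynomial factor; this is legitimate because the hypothesis $\beta<\frac{1}{2}$ leaves room to choose $\beta'$ still below $\frac{1}{2}$.
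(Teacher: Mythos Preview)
Your argument is correct and matches the paper's treatment. The paper does not give a formal proof of this cited result, but immediately after stating it sketches exactly the same mechanism: it defines the information set by the threshold $Z(W_N^{(i)})\leq\frac{1}{N}2^{-N^\beta}$, invokes the polarization-rate theorem to guarantee $|\Ac_N|/N\geq I(W)-\epsilon$, and then the union bound yields $P_e(N)\leq|\Ac_N|\cdot\frac{1}{N}2^{-N^\beta}\leq 2^{-N^\beta}$ directly. The only cosmetic difference is bookkeeping: the paper folds the factor $\frac{1}{N}$ into the Bhattacharyya threshold up front (which, to be justified from the stated polarization theorem, itself requires the same $\beta'>\beta$ slack you use), whereas you apply the polarization theorem at $\beta'$ and absorb the polynomial factor $N$ at the end via $N\cdot 2^{-N^{\beta'}}\leq 2^{-N^\beta}$.
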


Note that, for any given $\beta\in(0,\frac{1}{2})$ and
$\epsilon>0$, we can define the sequence of polar codes
by choosing the information indices as
$$\Ac_N=\{i\in\{1,\cdots,N\}: Z(W_{N}^{(i)}) \leq
\frac{1}{N}2^{-N^\beta}\}.$$
Then, from the above theorems, for sufficiently large $N$,
we can achieve the rate
$$R=\frac{|\Ac_N|}{N}\geq I(W)-\epsilon$$
with average block error probability (averaged over the
possible choices of $u_{\Ac_N^c}$)
$$P_e(N)\leq \sum\limits_{i\in\Ac_N} Z(W_{N}^{(i)}) \leq
2^{-N^\beta}$$
under SC decoding. (See also~\cite{Korada:Polar09}.)

This result shows the existence of a polar code $(N,K,\Ac,u_{\Ac^c})$
achieving the symmetric capacity of $W$. We remark that, any
frozen vector choice of $u_{\Ac^c}$ will work for
symmetric channels~\cite{Arikan:Channel09}.
For our purposes, we will denote a polar code for B-DMC $W$ with
$\Cc(N,\Fc,u_{\Fc})$, where the frozen set is given by
$\Fc \triangleq \Ac^c$.
Note that, $\Ac$ denotes the indices of information transmission
for the polar code, whereas $\Fc$ is the set of frozen indices.

We conclude this section by noting the following lemma
(given in~\cite{Korada:Polar09}) regarding polar coding
over degraded channels.

\begin{lemma}[Lemma 4.7 of~\cite{Korada:Polar09}]\label{thm:lemma}
Let $W:\Xc\to\Yc$ and $W':\Xc\to\Yc'$ be two B-DMCs such that
$W$ is degraded w.r.t. $W'$, i.e., there exists a channel
$W'':\Yc'\to\Yc$ such that
$$W(y|x)=\sum\limits_{y'\in\Yc'}
W'(y'|x)W''(y|y').$$
Then, $W_N^{(i)}$ is degraded w.r.t. ${W'}_N^{(i)}$
and $Z(W_N^{(i)})\geq Z({W'}_N^{(i)})$.
\end{lemma}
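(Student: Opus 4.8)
The plan is to separate the two assertions. The first is purely structural: the polar transform preserves stochastic degradation, so that $W_N^{(i)}$ is degraded with respect to ${W'}_N^{(i)}$. The second is a general monotonicity fact, namely that degradation can only increase the Bhattacharyya parameter; applying it to the pair of channels produced by the first assertion yields $Z(W_N^{(i)})\geq Z({W'}_N^{(i)})$.

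For the structural part I would argue directly from the definition of $W_N^{(i)}$ given above, using two facts: the input map $u_1^N\mapsto x_1^N=u_1^N B_N F^{\otimes n}$ is identical for $W$ and $W'$, and the vector channel factorizes as $W^N(y_1^N|x_1^N)=\prod_{j=1}^N W(y_j|x_j)$. The degrading channel I would propose from ${W'}_N^{(i)}$ to $W_N^{(i)}$ applies the degrading kernel $W''$ to each output coordinate $y_j'$ independently and relays the observed prefix $u_1^{i-1}$ noiselessly. Substituting $W(y_j|x_j)=\sum_{y_j'\in\Yc'}W'(y_j'|x_j)W''(y_j|y_j')$ into the definition and interchanging the sum over the degrading outputs with the products over coordinates and with the sum over $u_{i+1}^N$ collapses $\prod_j\big(\sum_{y_j'}W'(y_j'|x_j)W''(y_j|y_j')\big)$ back to $\prod_j W(y_j|x_j)$, which is exactly $W_N^{(i)}(y_1^N,u_1^{i-1}|u_i)$. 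This establishes the degradation relation for every $i$ and every $N=2^n$ at once.

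For the monotonicity part I would first prove the single-channel statement: if $W(y|x)=\sum_{y'}W'(y'|x)W''(y|y')$ then $Z(W)\geq Z(W')$. Writing $\sqrt{W(y|0)W(y|1)}$ with the degradation representation and applying the Cauchy--Schwarz inequality to the vectors with coordinates $\sqrt{W''(y|y')W'(y'|0)}$ and $\sqrt{W''(y|y')W'(y'|1)}$ indexed by $y'$ gives
$$\sqrt{W(y|0)W(y|1)}\geq\sum_{y'\in\Yc'}W''(y|y')\sqrt{W'(y'|0)W'(y'|1)}.$$
Summing over $y$ and using the normalization $\sum_{y}W''(y|y')=1$ produces $Z(W)\geq Z(W')$. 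Feeding in the pair $(W_N^{(i)},{W'}_N^{(i)})$ from the structural part completes the argument.

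The step requiring the most care is the structural one: one must identify the correct degrading channel --- a product of copies of $W''$ on the $y$-coordinates together with the identity on the side information $u_1^{i-1}$ --- and justify the interchange of the finite sums and products, which is where a recursive formulation via the single-step minus/plus transforms, proving preservation for one step and inducting on $n=\log_2 N$, is the standard alternative. The Bhattacharyya inequality, once the Cauchy--Schwarz pairing over $y'$ is set up, is routine.
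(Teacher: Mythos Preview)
Your argument is correct. The paper itself does not supply a proof of this lemma; it is simply quoted with attribution to Korada's thesis, so there is no in-paper proof to compare against. Your direct verification --- applying the product degrading kernel $\prod_j W''(y_j|y_j')$ to the $y$-coordinates while passing $u_1^{i-1}$ through unchanged, and then checking that this reproduces $W_N^{(i)}$ from ${W'}_N^{(i)}$ --- is valid and slightly more explicit than the usual recursive route (preserve degradation under a single $+/-$ step, then induct on $n$), which you correctly identify as the standard alternative. The Cauchy--Schwarz argument for $Z(W)\geq Z(W')$ under degradation is also correct as written: with $a_{y'}=\sqrt{W''(y|y')W'(y'|0)}$ and $b_{y'}=\sqrt{W''(y|y')W'(y'|1)}$, Cauchy--Schwarz gives exactly the pointwise bound you state, and summing over $y$ with $\sum_y W''(y|y')=1$ finishes it.
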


%%%%%%%%%%%%%%%%%%%%%%%%%%%%%%%%%%%%%%%%%%%%%%%%%%%%%%%%%%%%%%%%%%%%%%%%%%%%%%
%%%%%%%%%%%%%%%%%%%%%%%%%%%%%%%%%%%%%%%%%%%%%%%%%%%%%%%%%%%%%%%%%%%%%%%%%%%%%%

\section{Secure Transmission over Wiretap Channel}
\label{sec:WiretapChannel}

A discrete memoryless wiretap channel with
is denoted by
$$(\Xc, W(y_m,y_e|x),
\Yc_m \times \Yc_e),$$ for some finite sets
$\Xc, \Yc_m, \Yc_e$.
Here the symbols $x\in \Xc$ are the
channel inputs and the symbols $(y_m,y_e)\in \Yc_m
\times \Yc_e$ are the channel outputs observed at the
main decoder and at the eavesdropper, respectively.
The channel is memoryless and
time-invariant:
$$p({y_m}_i,{y_e}_i|{x}_1^i,
{y_m}_1^{i-1},{y_e}_1^{i-1})= W({y_m}_i,{y_e}_i|{x}_i).$$
We assume that the transmitter has a secret message
$M$ which is to be transmitted to the receiver in $N$
channel uses and to be secured from the eavesdropper.
In this setting, a secret codebook has the following components:

$1$) The secret message set $\Mc$.
The transmitted messages are assumed to be uniformly
distributed over these message sets.

$2$) A stochastic encoding function $f(.)$ at the transmitter
which maps the secret messages to the transmitted symbols:
$f:m\to {X}_1^N$ for each $m\in\Mc$.

$3$) Decoding function $\phi(.)$ at receiver
which maps the received symbols to estimate of the message:
$\phi({Y_m}_1^N)=\{\hat{m}\}$.

The reliability of transmission is measured by the following
probability of error.
\begin{IEEEeqnarray}{l}
P_{e}=\frac{1}{|\Mc|}\sum\limits_{(m)\in \Mc}
\textrm{Pr} \left\{\phi({Y_m}_1^N)\neq (m)| (m)
\textrm{ is sent}\right\}\nonumber
\end{IEEEeqnarray}
We say that the rate $R$ is an achievable secrecy rate,
if, for any given $\epsilon > 0$, there exists a
secret codebook such that,
\begin{eqnarray}\label{eq:secrecy}
\frac{1}{N}\log(|\Mc|) &=& R \nonumber\\
P_{e}&\leq& \epsilon \nonumber\\
\frac{1}{N}I\left(M;{Y_e}_1^N\right) &\leq& \epsilon
\end{eqnarray}
for sufficiently large $N$.

Consider a degraded binary-input wiretap channel,
where, for the input set $\Xc=\{0,1\}$, the
main channel is given by
\begin{eqnarray}
W_m(y_m|x)
\end{eqnarray}
and the eavesdropper channel is
\begin{eqnarray}
W_e(y_e|x)=\sum\limits_{y_m\in\Yc_m}
W_m(y_m|x)W_d(y_e|y_m).
\end{eqnarray}
Here, the degradation is due to the channel $W_d(y_e|y_m)$.

Note that, due to degradation, polar codes designed for the
eavesdropper channel can be used for the main channel.
For a given sufficiently large $N$ and $\beta\in(0,\frac{1}{2})$,
let
$$\Ac_m=\{i\in\{1,\cdots,N\}: Z({W_m}_{N}^{(i)}) \leq
\frac{1}{N}2^{-N^\beta}\},$$
$$\Ac_e=\{i\in\{1,\cdots,N\}: Z({W_e}_{N}^{(i)}) \leq
\frac{1}{N}2^{-N^\beta}\}.$$
Now, consider a polar code $\Cc_m\triangleq\Cc(N,\Fc_m,u_{\Fc_m})$
for the main channel with some $u_{\Fc_m}$.
Due to Lemma~\ref{thm:lemma}, we have
$\Ac_e \subset \Ac_m$ and hence $\Fc_m \subset \Fc_e$.
Now, for any given length $|\Fc_e|-|\Fc_m|$ vector
$\bar{v}_m$ and $u_{\Fc_m}$,
we define the frozen vector for the eavesdropper,
denoted by $u_{\Fc_e}(\bar{v}_m)$, by choosing
$(u_{\Fc_e}(\bar{v}_m))_{\Fc_m}=u_{\Fc_m}$ and
$(u_{\Fc_e}(\bar{v}_m))_{\Fc_e \backslash \Fc_m}=\bar{v}_m$.
Note that, denoting $\Cc_e(\bar{v}_m)\triangleq
\Cc(N,\Fc_e,u_{\Fc_e}(\bar{v}_m))$, the ensemble
$\cup_{\bar{v}_m,u_{\Fc_m}} \Cc_e(\bar{v}_m)$
is a symmetric capacity achieving polar code ensemble
for the eavesdropper channel $W_e$
(if the eavesdropper channel is symmetric, any frozen vector
choice will work~\cite{Arikan:Channel09}, and hence
the code achieves the capacity of the eavesdropper channel for
any $\bar{v}_m,u_{\Fc_m}$).
This implies that the code for the
main channel can be partitioned as
$\Cc_m=\cup_{\bar{v}_m} \Cc_e(\bar{v}_m)$.
This observation, when considered over the ensemble of codes,
enables us to construct secrecy achieving polar coding schemes,
even if the eavesdropper channel is not symmetric, as
characterized by the following theorem.

\begin{theorem}\label{thm:Wiretap}
For a binary-input degraded wiretap channel,
the perfect secrecy rate of $I(W_m)-I(W_e)$
is achieved by polar coding.
\end{theorem}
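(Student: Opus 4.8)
The plan is to construct an explicit secrecy scheme from the polar code partition already set up before the theorem, and then verify the three conditions in~\eqref{eq:secrecy}: rate, reliability, and secrecy. The encoding idea is to use the message bits together with a uniformly random ``dummy'' vector. Specifically, I would place the secret message $M$ on the coordinates $\Ac_e$ (the most reliable indices, good for \emph{both} Bob and Eve), and fill the coordinates in $\Fc_e\backslash\Fc_m=\Ac_m\backslash\Ac_e$ with a uniformly random vector $\bar{V}_m$. The coordinates in $\Fc_m$ are set to the fixed frozen vector $u_{\Fc_m}$. Thus the information set for the main channel is $\Ac_m=\Ac_e\cup(\Ac_m\backslash\Ac_e)$, carrying message plus randomness, which is exactly the partition $\Cc_m=\cup_{\bar{v}_m}\Cc_e(\bar v_m)$ described above. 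This is a stochastic encoder: for a fixed $m$, the transmitted codeword is randomized over the subcode indexed by $\bar V_m$.

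For the \textbf{rate}, since $M$ occupies $\Ac_e$ we get $\frac{1}{N}\log|\Mc|=\frac{|\Ac_e|}{N}$, and the randomness occupies $\Ac_m\backslash\Ac_e$ of size $\frac{|\Ac_m|-|\Ac_e|}{N}$. By Theorem~2 of~\cite{Arikan:On09} and the definitions of $\Ac_m,\Ac_e$, for large $N$ we have $\frac{|\Ac_m|}{N}\to I(W_m)$ and $\frac{|\Ac_e|}{N}\to I(W_e)$, so the secrecy rate approaches $I(W_m)-I(W_e)$. For \textbf{reliability}, since $M$ sits inside the main-channel information set $\Ac_m$, Bob runs SC decoding for $\Cc_m$; the averaged block error bound $P_e(N)\le\sum_{i\in\Ac_m}Z({W_m}_N^{(i)})\le 2^{-N^\beta}$ applies directly, and recovering $M$ only requires recovering the $\Ac_e$-subvector of the SC estimate, so $P_e\le\epsilon$.

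The \textbf{secrecy} condition is the crux and the main obstacle. I need $\frac{1}{N}I(M;{Y_e}_1^N)\le\epsilon$. The key leverage is that $\cup_{\bar v_m,u_{\Fc_m}}\Cc_e(\bar v_m)$ is a capacity-achieving polar code ensemble for the \emph{eavesdropper} channel: the coordinates in $\Ac_e$ are decodable by Eve only when the complementary coordinates (which include the random $\bar V_m$ on $\Ac_m\backslash\Ac_e\subset\Fc_e$) are known to her, but these are precisely the coordinates Eve cannot resolve. The heuristic is that from Eve's viewpoint the random vector $\bar V_m$ ``uses up'' exactly the capacity $I(W_e)$ she has, masking $M$. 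Making this rigorous is delicate because the standard polar reliability bound guarantees that Eve could decode $(M,\bar V_m)$ jointly \emph{if} she used ${Y_e}$ optimally, which is the wrong direction; instead I must bound the equivocation. I would proceed by expressing $I(M;{Y_e}_1^N)=I(M,\bar V_m;{Y_e}_1^N)-I(\bar V_m;{Y_e}_1^N\mid M)$ and arguing, via the fact that the full ensemble is a good code for $W_e$ with uniform input on $\Ac_e\cup(\Ac_m\backslash\Ac_e)$, that the randomness $\bar V_m$ is (asymptotically) recoverable from $({Y_e}_1^N,M)$. Concretely, I expect to invoke a Fano-type inequality: averaged over the ensemble, Eve's SC decoder reconstructs the $\Fc_e\backslash\Fc_m$ coordinates from ${Y_e}_1^N$ together with $M$ with vanishing error, whence $H(\bar V_m\mid {Y_e}_1^N,M)\le N\epsilon'$. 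Combining with $H(\bar V_m)=|\Ac_m|-|\Ac_e|$ and the chain rule then forces $I(M;{Y_e}_1^N)$ to be $o(N)$, giving the normalized bound. The genuinely hard part is ensuring the secrecy holds in the required \emph{strong/weak} sense uniformly over frozen vectors and tying Eve's decodability of $\bar V_m$ back to the polarization of $W_e$; I would lean on the symmetry/any-frozen-vector property of polar codes and the ensemble averaging to pin this down.
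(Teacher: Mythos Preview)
Your overall architecture---chain rule plus a Fano bound to control the leakage---is exactly what the paper uses, but you have \emph{swapped} the placement of message and randomness, and this breaks both the rate claim and the secrecy step.

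On the rate: you put $M$ on $\Ac_e$, so $\frac{1}{N}\log|\Mc|=\frac{|\Ac_e|}{N}\to I(W_e)$, not $I(W_m)-I(W_e)$ as you conclude. The set of size $|\Ac_m|-|\Ac_e|$ is where your \emph{randomness} lives, so your arithmetic in the rate paragraph does not match your encoding.

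More seriously, your Fano step fails. You want $H(\bar V_m\mid {Y_e}_1^N,M)\le N\epsilon'$, i.e., that Eve can decode your randomness $\bar V_m$ once handed $M$. But $\bar V_m$ sits on $\Ac_m\backslash\Ac_e\subset\Fc_e$: these are precisely Eve's \emph{bad} indices, with $Z({W_e}_N^{(i)})$ not driven to zero. The polar guarantee is that the $\Ac_e$-coordinates are SC-decodable given the $\Fc_e$-coordinates, never the reverse. So Eve, even with $M$ in hand, has no reliability guarantee for recovering $\bar V_m$, and the Fano argument collapses. (Your own heuristic sentence---``these are precisely the coordinates Eve cannot resolve''---already says this.)

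The paper does the opposite placement: the message (there called $\bar v_m$) goes on $\Fc_e\backslash\Fc_m=\Ac_m\backslash\Ac_e$ (good for Bob, bad for Eve), and the uniform randomness $\bar v_r$ goes on $\Ac_e$. Then (i) the message rate is $(|\Ac_m|-|\Ac_e|)/N\to I(W_m)-I(W_e)$; and (ii) if Eve is \emph{given} the message, she now knows all of $u_{\Fc_e}$ and can SC-decode $\bar v_r$ on $\Ac_e$ with error $O(2^{-N^\beta})$, so Fano yields $H(\bar V_r\mid\bar V_m,U_{\Fc_m},{Y_e}_1^N)\le N\epsilon(N)$. The chain rule then gives
\[
I(\bar V_m;{Y_e}_1^N\mid U_{\Fc_m})
= I(\bar V_m,\bar V_r;{Y_e}_1^N\mid U_{\Fc_m})-I(\bar V_r;{Y_e}_1^N\mid \bar V_m,U_{\Fc_m})
\le NI(W_e)-|\Ac_e|+N\epsilon(N),
\]
which vanishes after normalization since $|\Ac_e|/N\to I(W_e)$. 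Once you swap the roles back, your outline becomes essentially the paper's proof.
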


\begin{proof}

\textbf{Encoding:}
We map the secret message to be transmitted to
$\bar{v}_m$ and generate a random vector $\bar{v}_r$,
according to uniform distribution over $\Xc$,
of length $|\Ac_e|$. Then, the channel input
is constructed with $x_1^N=u_1^N B_N F^{\otimes n}$,
where $u_{\Fc_m}$ is the frozen vector of the
polar code $\Cc_m$, $u_{\Fc_e \backslash \Fc_m}=\bar{v}_m$,
and $u_{\Ac_e}=\bar{v}_r$. The polar code ensemble is
constructed over all different choices of frozen vectors,
i.e., $u_{\Fc_m}$.

\textbf{Decoding:}
The vectors $\bar{v}_m$ and $\bar{v}_r$ can be decoded
with the SC decoder described above with error probability
$P_e = O(2^{-N^{\beta}})$ (averaged over the ensemble)
achieving a rate $R=\frac{|\bar{v}_m|}{N}=I(W_m)-I(W_e)$
for sufficiently large $N$.

\textbf{Security:}
Lets assume that the vector $\bar{v}_m$ is given
to the eavesdropper along with $u_{\Fc_m}$. Then,
employing the SC decoding, the eavesdropper can decode
the random vector $\bar{v}_r$ with
$P_e = O(2^{-N^{\beta}})$ averaged over the ensemble.
Utilizing the Fano's
inequality and average it over the code ensemble
seen by the Eve, i.e. over $\bar{V}_m$ and $U_{\Fc_m}$,
we obtain
\begin{eqnarray}\label{eq:Fanos}
H(\bar{V}_r|\bar{V}_m,U_{\Fc_m},{Y_e}_1^N)
\leq H(P_e) + N \log(|\Xc|)P_e
\leq N \epsilon(N),
\end{eqnarray}
where $\epsilon(N)\to 0$ as $N\to \infty$.

Then, the mutual information leakage to the eavesdropper
averaged over the ensemble can be bounded as follows.

$I(M;{Y_e}_1^N|U_{\Fc_m}) =
I(\bar{V}_m;{Y_e}_1^N|U_{\Fc_m})$
\begin{eqnarray}
&=& I(\bar{V}_m,\bar{V}_r;{Y_e}_1^N|U_{\Fc_m})-
I(\bar{V}_r;{Y_e}_1^N|\bar{V}_m,U_{\Fc_m})\\
&\stackrel{(a)}{=}& I(U_1^N;{Y_e}_1^N) - H(\bar{V}_r) + H(\bar{V}_r|\bar{V}_m,U_{\Fc_m},{Y_e}_1^N) \\
&\stackrel{(b)}{\leq}& I(X_1^N;{Y_e}_1^N) - H(\bar{V}_r) + H(\bar{V}_r|\bar{V}_m,U_{\Fc_m},{Y_e}_1^N) \\
&\stackrel{(c)}{\leq}& NI(W_e) - |\Ac_e| + H(\bar{V}_r|\bar{V}_m,u_{\Fc_m},{Y_e}_1^N) \\
&\stackrel{(d)}{\leq}& N I(W_e) - |\Ac_e| + N \epsilon(N),
\end{eqnarray}
where in (a) we have $U_1^N$ each entry with i.i.d. uniformly
distributed, (b) follows from data processing inequality,
(c) is due to $I(X_1^N;{Y_e}_1^N)=
\sum\limits_{i=1}^N I(X_1^N;{Y_e}_i|{Y_e}_1^{i-1})\leq
\sum\limits_{i=1}^N H({Y_e}_i) - H({Y_e}_i|{X}_i)=
N I(X_i;{Y_e}_i)$
with a uniformly distributed $X_i$, and (d) follows from \eqref{eq:Fanos}
with $\epsilon(N)\to 0$ as $N\to \infty$.
As $\frac{|\Ac_e|}{N}\to I(W_e)$ as $N$ gets large,
we obtain
\begin{eqnarray}
\frac{1}{N}I(\bar{V}_m;{Y_e}_1^N|U_{\Fc_m}) &\leq& \epsilon
\end{eqnarray}
for a given $\epsilon>0$ for sufficiently large $N$.
As the reliability and secrecy constraints are satisfied averaged
over the ensemble, there exist a polar code with some fixed $u_{\Fc_m}$
achieving the secure rate $I(W_m)-I(W_e)$.
\end{proof}

Note that in the above result, the code satisfying the reliability and
the secrecy constraints can be found from the ensemble by an exhaustive
search. However, as block length increases, almost all the codes in the
ensemble will do equally well.
If the eavesdropper channel is symmetric, then the secrecy constraint is
satisfied for any given frozen vector $u_{\Fc_m}$ and the code search
is only for the reliability constraint. If the eavesdropper channel
is not symmetric, a prefix channel can be utilized to have this
property.
\begin{corollary}
For non-symmetric eavesdropper channels, the channel can be
prefixed with some $p(x|x')$ such that the resulting
eavesdropper channel
$$W_e'(y_e|x')=\sum\limits_{y_m\in\Yc_m}
p(x|x')W_m(y_m|x)W_d(y_e|y_m)$$
is symmetric. Then, using the scheme above, the secret rate
$$R=I(W_m')-I(W_e')$$
is achievable, where $W_m'(y_m|x')=p(x|x')W_m(y_m|x)$.
\end{corollary}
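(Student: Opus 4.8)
The plan is to reduce the asymmetric case to the symmetric case already settled in Theorem~\ref{thm:Wiretap} by absorbing the prefix $p(x|x')$ into the channel and regarding $x'\in\{0,1\}$ as the new binary input. First I would form the prefixed wiretap channel with main and eavesdropper transition laws
\[
W_m'(y_m|x')=\sum_{x\in\Xc}p(x|x')W_m(y_m|x),\qquad
W_e'(y_e|x')=\sum_{x\in\Xc}p(x|x')W_e(y_e|x);
\]
since $x'$ ranges over a binary alphabet, both $W_m'$ and $W_e'$ are again B-DMCs, so the whole polar coding machinery remains applicable.

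The key step is to check that prefixing preserves the degradation order. Substituting $W_e(y_e|x)=\sum_{y_m}W_m(y_m|x)W_d(y_e|y_m)$ into $W_e'$ and interchanging the summations gives
\[
W_e'(y_e|x')=\sum_{y_m\in\Yc_m}\Bigl(\sum_{x\in\Xc}p(x|x')W_m(y_m|x)\Bigr)W_d(y_e|y_m)
=\sum_{y_m\in\Yc_m}W_m'(y_m|x')W_d(y_e|y_m),
\]
so $W_e'$ is degraded with respect to $W_m'$ through the \emph{same} degrading channel $W_d$. Hence $(W_m',W_e')$ is itself a binary-input degraded wiretap channel, and Theorem~\ref{thm:Wiretap} applies to it verbatim, giving the achievable secrecy rate $R=I(W_m')-I(W_e')$. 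Operationally the transmitter runs the encoder of Theorem~\ref{thm:Wiretap} on the symbols $x'$ and then feeds each $x'$ through the prefix $p(x|x')$ to produce the physical input $x$; the decoder and the reliability/secrecy analysis carry over unchanged.

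Finally I would pick $p(x|x')$ so that $W_e'$ is a symmetric B-DMC. Once $W_e'$ is symmetric, the remark following Theorem~\ref{thm:Wiretap}, together with Ar{\i}kan's observation that any frozen vector attains capacity on a symmetric channel, ensures the secrecy constraint is met for every frozen vector $u_{\Fc_m}$, so the code search reduces to the reliability constraint alone. The step I expect to be the genuine obstacle is exhibiting a binary prefix that simultaneously symmetrizes $W_e'$ and keeps $I(W_m')-I(W_e')>0$; by contrast, the degradation-preservation identity and the reduction to Theorem~\ref{thm:Wiretap} are routine.
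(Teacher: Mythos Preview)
Your proposal is correct and matches the intended argument: the paper states this corollary without proof, as an immediate consequence of Theorem~\ref{thm:Wiretap} applied to the prefixed channel, and your write-up fills in exactly those details (degradation is preserved under prefixing, so Theorem~\ref{thm:Wiretap} applies to $(W_m',W_e')$; symmetry of $W_e'$ then removes the need to search over frozen vectors for secrecy). One small remark: your final concern about ensuring $I(W_m')-I(W_e')>0$ is not actually part of what the corollary asserts---it only claims that $I(W_m')-I(W_e')$ is achievable for \emph{some} symmetrizing prefix, and rate zero is trivially achievable---so no obstacle remains.
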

Finally, we note that the scheme achieves the secrecy capacity
and any code in the ensemble, i.e., any fixed $u_{\Fc_m}$,
will satisfy both the reliability and secrecy constraints, if the
main and eavesdropper channels are symmetric.
\begin{corollary}\label{thm:cor}
For a binary-input degraded wiretap channel with
symmetric main and eavesdropper channels,
polar coding achieves the secrecy capacity,
i.e., $C(W_m)-C(W_e)$, of the channel.
\end{corollary}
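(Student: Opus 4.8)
The plan is to combine the achievability already in hand from Theorem~\ref{thm:Wiretap} with the classical single-letter converse for the degraded wiretap channel, and then to identify the optimizing input distribution using the symmetry of the two channels. First I would record the elementary fact that for a symmetric B-DMC the uniform input is capacity-achieving, so the symmetric capacity coincides with the Shannon capacity: $I(W_m)=C(W_m)$ and $I(W_e)=C(W_e)$. Since Theorem~\ref{thm:Wiretap} shows that polar coding achieves the perfect-secrecy rate $I(W_m)-I(W_e)$, this immediately yields achievability of $C(W_m)-C(W_e)$.

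For the converse I would invoke the single-letter characterization of the secrecy capacity of the degraded wiretap channel (Wyner~\cite{Wyner:The75}, Csisz\'ar--K\"orner~\cite{Csiszar:Broadcast78}), namely $C_s=\max_{p(x)}\left[I(X;Y_m)-I(X;Y_e)\right]$, where no prefix/auxiliary variable is needed precisely because of degradation. It then remains to show this maximum equals $C(W_m)-C(W_e)=I(W_m)-I(W_e)$, i.e., that the difference $I(X;Y_m)-I(X;Y_e)$ is maximized by the uniform input. I emphasize that this does \emph{not} follow from single-channel optimality alone, since subtracting $I(X;Y_e)$ rewards rather than penalizes making $Y_e$ uninformative, so the two terms pull in opposite directions.

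The key step, and the part I expect to require genuine work, is showing the uniform input maximizes $f(\alpha)\triangleq I_{\alpha}(X;Y_m)-I_{\alpha}(X;Y_e)$, where $\alpha=\Prob(X=1)$. I would establish this in two moves. By symmetry of both $W_m$ and $W_e$ under the input relabeling $0\leftrightarrow 1$, both mutual informations are invariant, so $f(\alpha)=f(1-\alpha)$; thus $f$ is symmetric about $\alpha=\frac{1}{2}$. Next I would show $f$ is concave in $p(x)$: using degradation, $X-Y_m-Y_e$ is a Markov chain, so $f=I(X;Y_m|Y_e)$, and introducing a time-sharing variable $Q$ with $X|\{Q=q\}\sim p_q$ and $Q-X-(Y_m,Y_e)$ Markov, the chain rule gives $I(X;Y_m|Y_e)-I(X;Y_m|Y_e,Q)=I(Q;Y_m|Y_e)\geq 0$, which is exactly the concavity inequality since $I(X;Y_m|Y_e,Q)$ is the average of $f$ over the component distributions. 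A symmetric concave function on $[0,1]$ attains its maximum at the midpoint, so $\alpha=\frac{1}{2}$ is optimal and $C_s=I(W_m)-I(W_e)=C(W_m)-C(W_e)$, matching the rate achieved by polar coding. The delicate point is the concavity claim, as $f$ is a difference of two mutual informations; the $I(X;Y_m|Y_e)$ representation together with the auxiliary $Q$ is what makes it go through, and for the standard BSC/BEC prototypes it can also be checked directly.
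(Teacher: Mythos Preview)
Your argument is correct. Achievability via Theorem~\ref{thm:Wiretap} together with $I(W)=C(W)$ for symmetric B-DMCs is exactly how the paper proceeds. On the converse side you go further than the paper does: the paper states the corollary without proof, implicitly treating ``$C_s=C(W_m)-C(W_e)$ for symmetric degraded wiretap channels'' as a known fact inherited from~\cite{Wyner:The75,Csiszar:Broadcast78}, whereas you actually justify that the uniform input maximizes $I(X;Y_m)-I(X;Y_e)$. Your route via the identity $I(X;Y_m)-I(X;Y_e)=I(X;Y_m\mid Y_e)$ (from the Markov chain $X\to Y_m\to Y_e$), the time-sharing $Q$ to get concavity, and the symmetry $f(\alpha)=f(1-\alpha)$ is clean and correct; this is a nice self-contained substitute for simply citing the secrecy-capacity formula.

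One small point the paper emphasizes that you do not: when both channels are symmetric, the reliability and secrecy guarantees hold not merely on average over the frozen vector $u_{\Fc_m}$ but for \emph{every} choice of it (cf.\ the remark preceding Corollary~\ref{thm:cor} and~\cite{Arikan:Channel09}). This is not needed to establish the rate statement of the corollary, but it strengthens the conclusion from ``some code in the ensemble works'' to ``any code in the ensemble works,'' and you may want to fold it in.
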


We note that the stated results are achievable by encoders and
decoders with complexity of $O(N \log N)$ for each.
In addition, if the channels are binary erasure
channels (BECs), then there exists algorithms with
complexity $O(N)$ for the code construction~\cite{Arikan:Channel09}.

%%%%%%%%%%%%%%%%%%%%%%%%%%%%%%%%%%%%%%%%%%%%%%%%%%%%%%%%%%%%%%%%%%%%%%%%%%%%%%
%%%%%%%%%%%%%%%%%%%%%%%%%%%%%%%%%%%%%%%%%%%%%%%%%%%%%%%%%%%%%%%%%%%%%%%%%%%%%%

\section{Secret Key Agreement over Fading Wiretap Channels}
\label{sec:Fading}

In this section, we focus on the following key agreement problem:
Alice, over fading wiretap channel, would like to
agree on a secret key with Bob in the presence of passive eavesdropper
Eve. We focus on the special case of binary erasure main and
eavesdropper channels, for which the code construction is
shown to be simple~\cite{Arikan:Channel09}.

Fading blocks are represented by $i=1,\cdots,LM$ and
each block has $N$ channel uses.
Random variables over blocks are represented with the following
bar notation. $\bar{Y}_e^{(l;m)}$ denotes the observations of Eve
over the fading block $m$ of the super block $l$,
the observations of Eve over super block $l\in[1,L]$ is
denoted by $\bar{\bar{Y}}_e^{(l)}=
\bar{Y}_e^{(l;1\cdots M)} \triangleq
\{\bar{Y}_e^{(l;1)},\cdots,\bar{Y}_e^{(l;M)} \}$, and
Eve's total observation over all
super blocks is denoted by $Y_e^*=\bar{\bar{Y}}_e^{(1\cdots L)}
=\{\bar{\bar{Y}}_e^{(1)},\cdots,\bar{\bar{Y}}_e^{(L)}\}$.

Main and eavesdropper channels are binary erasure channels
and are denoted by $W_m^{(i)}$
and $W_e^{(i)}$, respectively.
Here, the channels $W_m$ and $W_e$ are random, outcome of which
result in the channels of each block.
Instantaneous eavesdropper CSI is not known at the users,
only the statistical knowledge of it is assumed.
The channels are assumed to be physically degraded w.r.t.
\emph{some} order at each block.
\footnote{Remarkable, a random walk model with packet erasures can
be covered with this model. Also, parallel channel model is
equivalent to this scenario.}
Note that, in this setup, eavesdropper channel
can be better than the main channel on the average.

We utilize the proposed secrecy encoding scheme for the wiretap
channel at each fading block. Omitting the block indices,
frozen and information bits are denoted as $u_{\Fc_m}$ and
$u_{\Ac_m}$, respectively.
Information bits are uniformly distributed
binary random variables and are mapped to $u_{\Ac_m}$.
Secret and randomization bits among these information bits
are denoted by $\bar{V}_m$ and $\bar{V}_r$, respectively.
Frozen bits are provided both to main receiver and eavesdropper
at each block. (We omitted writing this side information below
as all zero vector can be chosen as the frozen vector for
the erasure channel~\cite{Arikan:Channel09}.)
Note that Alice and Bob do not know the length of $\bar{V}_m^{(i)}$
at fading block $i$. In particular, there may not be any secured bits
at a given fading block.

Considering the resulting information accumulation over a block,
we obtain the followings.
\begin{eqnarray}
\frac{1}{N}H(\bar{V}_m^{(i)}) &=& [C(W_m^{(i)}) - C(W_e^{(i)})]^+\nonumber\\
\frac{1}{N}H(\bar{V}_r^{(i)}) &=& \min\{C(W_m^{(i)}),C(W_e^{(i)})\},\nonumber
\end{eqnarray}
where the former denotes the amount of secure information generated at
block $i$ (here the secrecy level is the bound on the mutual information
leakage rate), and the latter denotes the remaining information.
Note that these entropies are random variables as channels are random
over the blocks. Remarkable, this scheme converts the fading
phenomenon to the advantage of Alice and Bob
(similar to the enabling observation utilized in~\cite{Gopala:On08}).
Exploiting this observation and coding over $LM$ fading blocks, the
proposed scheme below creates advantage for the main users:
As $L,M,N$ get large, information bits, denoted by $W^*$, are
w.h.p. reliably decoded at the Bob,
$H(W^*) \to L M N \: E\left[ C(W_m) \right]$,
and
$H(W^*|Y_e^*) \to L M N \: E\left[ [C(W_m) - C(W_e)]^+ \right]$.
This accomplishes \emph{both} advantage distillation and information
reconciliation phases of a key agreement
protocol~\cite{Bennett:Generalized95,Bloch:WirelessI}.
Now, a third phase (called as \emph{privacy amplification})
is needed to distill a shorter string $K$ from $W^*$, about which
Eve has only a negligible amount of information. The privacy
amplification step can be done with universal hashing
as considered in~\cite{Bennett:Generalized95}.
We first state the following definitions and lemma regarding
universal hashing, and then formalize the main result of this section
in the following theorem.

\begin{definition}
A class $\Gc$ of functions $\Ac \to \Bc$ is universal if, for
any $x_1 \neq x_2$ in $\Ac$, the probability that $g(x_1)=g(x_2)$
is at most $\frac{1}{|\Bc|}$ when $g$ is chosen as random from $\Gc$
according to the uniform distribution.
\end{definition}

There are efficient universal classes, e.g., to map $n$ bits to
$r$ bits, class of linear functions given by $r \times n$
matrices needs $rn$ bits to describe~\cite{Carter:Universal79}.
Note that hash function should have complexity as 1) it will be
revealed to each user, and 2) Alice and Bob will compute $g(W^*)$.
There are more efficient classes with polynomial time evaluation complexity
and $O(n)$ description complexity~\cite{Carter:Universal79}.

Generalized privacy amplification, proposed in~\cite{Bennett:Generalized95},
is based on the following property of universal hashing.
\begin{lemma}[Theorem 3,~\cite{Bennett:Generalized95}]\label{thm:PA}
Let $X\in\Xc$ be a random variable with distribution $P_X$ and
R{\'e}nyi entropy (of second order) $R(X)= -\log_2 E[P_X(X)]$.
Let $G$ be a random choice (according to
uniform distribution) of a member of universal class of hash functions
$\Xc \to \{0,1\}^r$, and let $Q=G(X)$. Then, we have
$$H(Q|G)\geq R(Q|G) \geq r- \log_2 \left( 1 + 2^{r-R(X)} \right)
\geq r- \frac{2^{r-R(X)}}{\ln 2}.$$
\end{lemma}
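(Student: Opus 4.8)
The statement chains together three inequalities, and the plan is to establish them in order of increasing difficulty, leaving the genuinely load-bearing one for the middle. The two outer links are routine. For $H(Q|G) \geq R(Q|G)$, I would invoke the general fact that the second-order R\'enyi entropy never exceeds the Shannon entropy: applying Jensen's inequality to the convex function $-\log_2$ with the random argument $P_{Q|G}(Q|g)$ gives, for each fixed $g$, that $R(Q|G=g) = -\log_2 E[P_{Q|G}(Q|g)] \leq E[-\log_2 P_{Q|G}(Q|g)] = H(Q|G=g)$, and averaging over $G$ preserves the inequality. For the final link, I would use the elementary estimate $\ln(1+t) \leq t$, i.e. $\log_2(1+t) \leq t/\ln 2$, specialized to $t = 2^{r-R(X)} \geq 0$.

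The core of the argument is the middle inequality $R(Q|G) \geq r - \log_2(1 + 2^{r-R(X)})$, and here the plan is to control the collision probability of $Q = G(X)$. Introducing an independent copy $X'$ of $X$ and writing the conditional collision probability as $P_c(Q\mid g) = \Prob[g(X) = g(X')]$, I would average over $G$ to obtain $E_G[P_c(Q|G)] = \Prob[G(X) = G(X')]$, where now $G, X, X'$ are mutually independent. Splitting this event according to whether $X = X'$ produces two contributions: the diagonal term $\Prob[X=X'] = \sum_x P_X(x)^2 = E[P_X(X)] = 2^{-R(X)}$, and the off-diagonal term from genuine collisions $X \neq X'$, which the universality of $\Gc$ controls through $\Prob_G[G(x)=G(x')] \leq 1/|\Bc| = 2^{-r}$ for each pair $x \neq x'$, summing to at most $2^{-r}$. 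Combining these yields $E_G[P_c(Q|G)] \leq 2^{-R(X)} + 2^{-r}$.

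To turn this bound on the averaged collision probability into a bound on the conditional R\'enyi entropy under the convention $R(Q|G) = E_G[-\log_2 P_c(Q|G)]$, I would apply Jensen's inequality to the convex $-\log_2$ a second time, now moving $E_G$ inside: $R(Q|G) \geq -\log_2 E_G[P_c(Q|G)] \geq -\log_2\!\left(2^{-R(X)} + 2^{-r}\right)$. Factoring out $2^{-r}$ rewrites the right-hand side as $r - \log_2(1 + 2^{r-R(X)})$, which is precisely the middle bound, completing the chain.

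The step I expect to require the most care is the collision-probability computation, since it is the only place that actually invokes the universal-hashing hypothesis, and the splitting of the collision event must be handled cleanly: the diagonal $X=X'$ is what reproduces the source's R\'enyi term $2^{-R(X)}$, while the off-diagonal is exactly where universality enters to supply $2^{-r}$. A secondary subtlety is keeping the two Jensen applications straight---one lower-bounds Shannon entropy by R\'enyi entropy, the other passes the expectation over $G$ through $-\log_2$---and in particular fixing the averaging convention for $R(Q|G)$ so that convexity points in the direction needed.
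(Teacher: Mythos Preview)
Your proof is correct and is essentially the standard argument for this privacy-amplification lemma. Note, however, that the paper does not supply its own proof of this statement: it is quoted verbatim as Theorem~3 of Bennett, Brassard, Cr\'epeau, and Maurer (1995) and used as a black box. So there is no ``paper's proof'' to compare against here; your write-up simply reconstructs the original argument from that reference, and it does so cleanly---the two Jensen applications are in the right direction, the collision-probability split into diagonal ($2^{-R(X)}$) and off-diagonal ($\leq 2^{-r}$ via universality) terms is exactly the crux, and the final $\log_2(1+t)\leq t/\ln 2$ step is immediate.
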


Exploiting the proposed coding scheme, which creates advantage in favor
of Bob over the fading channel, we use the hash functions described
above and obtain the following result.
\begin{theorem}
For any $\epsilon,\epsilon^*>0$, let
$$n=L \: M \: N \: \left(E\left[ C(W_m) \right]-\epsilon^*\right),$$
$$r= L \: M \: N \: \left( E\left[ [C(W_m) - C(W_e)]^+ \right] - \epsilon^* \right).$$
Then, for sufficiently large $L$, $M$ and $N$,
Alice and Bob can w.h.p. agree on
the random variable $W^*\triangleq \bar{\bar{W}}^{(1\cdots L)}$
of length $n$ over $LM$ fading blocks (i.e.,
$\Prob\{W^*\neq \hat{W}^*\} \leq \epsilon$, where
$\hat{W^*}$ denotes the estimate at Bob); and
choose $K=G(W^*)$ as their secret key (here $G$ is
chosen uniformly random from universal class of hash functions
$\{0,1\}^n \to \{0,1\}^r$) satisfying
$$I(K;Y_e^*, G) \leq \epsilon,$$
where $Y_e^* \triangleq \bar{\bar{Y}}_e^{(1\cdots L)}$
denotes the Eve's total received symbols.
\end{theorem}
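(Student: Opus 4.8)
The plan is to treat this as the three classical phases of secret key agreement—advantage distillation together with information reconciliation, an entropy estimate of Eve's residual uncertainty, and privacy amplification—and then to assemble them. Throughout, Alice fills every information position $u_{\Ac_m}^{(i)}$ of block $i$ with fresh uniform bits, so the partition of these into a ``secure'' part $\bar V_m^{(i)}$ and a ``randomization'' part $\bar V_r^{(i)}$ is purely an analytic device (Alice cannot make it, since she lacks Eve's CSI); the shared string $W^*$ is simply the concatenation of all information bits over the $LM$ blocks. For reliability, Bob knows his own erasure channel and the all-zero frozen vector is public, so he runs the per-block SC decoder of Theorem~\ref{thm:Wiretap} and recovers the full vector $u_{\Ac_m}^{(i)}$ with block error $O(2^{-N^\beta})$; a union bound over the $LM$ blocks gives $\Prob\{W^*\neq\hat W^*\}\leq LM\cdot O(2^{-N^\beta})$, which falls below $\epsilon$ once $N$ is large. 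Because the fading realizations are i.i.d. across blocks, the per-block content $\tfrac1N H(u_{\Ac_m}^{(i)})=C(W_m^{(i)})$ concentrates by the law of large numbers, so $H(W^*)=\sum_i N\,C(W_m^{(i)})\to LMN\,E[C(W_m)]$, which justifies the choice of $n$ and completes reconciliation.

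Next I would bound what Eve learns. Granting Eve her instantaneous CSI (which only helps her), her observations factor across the independent blocks, so $I(W^*;Y_e^*)\leq\sum_i I\big(u_{\Ac_m}^{(i)};\bar Y_e^{(i)}\big)$. Within a block, Lemma~\ref{thm:lemma} supplies the nested structure $\Ac_e\subset\Ac_m$ (or its reverse when Eve is stronger), the randomization bits contribute at most their entropy $N\min\{C(W_m^{(i)}),C(W_e^{(i)})\}$, and the secure bits leak only $N\epsilon_i$ by the security argument of Theorem~\ref{thm:Wiretap}. Summing and invoking the law of large numbers yields $H(W^*\mid Y_e^*)=H(W^*)-I(W^*;Y_e^*)\to LMN\,E\big[[C(W_m)-C(W_e)]^+\big]$, using $C(W_m)-\min\{C(W_m),C(W_e)\}=[C(W_m)-C(W_e)]^+$.

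The hardest step is feeding this into Lemma~\ref{thm:PA}, which requires a lower bound on the \emph{conditional second-order Rényi entropy} $R(W^*\mid Y_e^*=y_e^*)$, whereas the previous step controls only the Shannon quantity $H(W^*\mid Y_e^*)$; since $R\leq H$ in general, closing this gap is the main obstacle. Here the restriction to erasure channels is decisive: the synthesized polar subchannels of a BEC are again BECs, so conditioned on Eve's observation every secure coordinate is either revealed exactly or erased exactly, and for a typical $y_e^*$ the conditional law of $W^*$ is essentially uniform on its support. For a uniform law $R=H=\log_2|\textrm{supp}|$, so a typicality/concentration argument over $y_e^*$ should give $R(W^*\mid Y_e^*=y_e^*)\geq LMN\big(E[[C(W_m)-C(W_e)]^+]-\tfrac12\epsilon^*\big)$ on an event of probability approaching one. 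Applying Lemma~\ref{thm:PA} with $Q=K=G(W^*)$ and the stated $r$, the penalty $2^{\,r-R(W^*\mid Y_e^*=y_e^*)}/\ln 2$ is doubly-exponentially small on this event because $r-R\leq-\tfrac12 LMN\epsilon^*$. Averaging $H(K\mid G,Y_e^*=y_e^*)\geq r-2^{\,r-R}/\ln 2$ over $G$ and over the typical $y_e^*$, and bounding the atypical contribution to the leakage by $\Prob\{\textrm{atypical}\}\cdot r\to 0$, gives $I(K;Y_e^*,G)=r-H(K\mid Y_e^*,G)\leq\epsilon$, as required.
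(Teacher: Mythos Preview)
Your proposal is correct and tracks the paper's three-phase skeleton (reliability via a union bound over $LM$ blocks, the computation of $H(W^*\mid Y_e^*)$, and privacy amplification via Lemma~\ref{thm:PA}), but you diverge from the paper at the crucial Shannon-to-R\'enyi step. The paper does \emph{not} exploit the BEC-specific fact that polar subchannels of a BEC are BECs and hence that the conditional law of $W^*$ given any $y_e^*$ is exactly uniform on an affine subspace. Instead, it introduces the super-block layer precisely so that $(\bar{\bar W}^{(1)},\ldots,\bar{\bar W}^{(L)},\bar{\bar Y}_e^{(1)},\ldots,\bar{\bar Y}_e^{(L)})$ can be viewed as $L$ i.i.d.\ copies of a single super-block pair $(\bar{\bar W},\bar{\bar Y}_e)$, and then invokes the general typicality lemma of Maurer--Wolf (Lemma~6 of~\cite{Maurer:Information00}) to obtain $R(W^*\mid Y_e^*=y_e^*,T=1)\geq L\big(H(\bar{\bar W}\mid\bar{\bar Y}_e)-2\delta\big)+\log(1-\delta)$ on the typical event. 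Your route is more elementary and in fact makes the $L$-layer unnecessary (a direct concentration over the $LM$ i.i.d.\ blocks suffices once $R=H$ pointwise), but it is tied to erasure channels; the paper's route is channel-agnostic, which is exactly what allows its Remark~3 to extend the theorem to arbitrary binary-input channels by replacing capacities with symmetric mutual informations. One minor slip: in your last line you should write $I(K;Y_e^*,G)=H(K)-H(K\mid Y_e^*,G)\leq r-H(K\mid Y_e^*,G)$, since $H(K)\leq r$ but equality is not automatic.
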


\begin{proof}

We repeat the described scheme over $LM$ fading blocks.
Due to the construction above, we have
\begin{equation}\label{eq:2}
\frac{1}{N} H(\bar{V}_m^{(i)}) - \epsilon_1
\leq \frac{1}{N} H(\bar{V}_m^{(i)}|\bar{Y}_e^{(i)})
\leq \frac{1}{N} H(\bar{V}_m^{(i)}),
\end{equation}
where $\frac{1}{N} H(\bar{V}_m^{(i)}) = [C(W_m^{(i)}) - C(W_e^{(i)})]^+$
and $\epsilon_1\to 0$ as $N$ gets large
(follows from the fact that conditioning does not increase entropy and the
security of $\bar{V}_m^{(i)}$),
and
\begin{equation}\label{eq:3}
\frac{1}{N} H(\bar{V}_r^{(i)}|\bar{Y}_e^{(i)}, \bar{V}_m^{(i)}) \leq \epsilon_2,
\end{equation}
where $\epsilon_2\to0$ as $N\to\infty$ (follows from Fano's inequality).

We now consider the total information accumulation and leakage.
Let $W^*=\bar{\bar{W}}^{(1\cdots L)}\triangleq \{\bar{V}_m^{(l;m)},\bar{V}_r^{(l;m)},
\forall l\in[1,L], \forall m\in[1,M]\}$ and denote the estimate
of it at Bob as $\hat{W}^*$.
We obtain that, there exist
$N_1,M_1$, s.t. for any $N\geq N_1$ and $M\geq M_1$, we have
\begin{equation}\label{eq:fading1}
H(W^*) \geq L M N \left(E\left[ C(W_m) \right]-\epsilon^*\right)
\end{equation}
\begin{equation}\label{eq:fading2}
\Prob\{W^*\neq \hat{W}^*\} \leq LM 2^{-N^{\beta}},
\end{equation}
for some $\beta\in (0,\frac{1}{2})$ due to polar coding and
the union bound.

Considering $Y_e^*\triangleq \bar{\bar{Y}}_e^{(1\cdots L)}$
at Eve, we write

$H(W^*|Y_e^*)= \sum\limits_{l=1}^{L}
H(\bar{\bar{W}}^{(l)}|\bar{\bar{Y}}_e^{(l)})$
\begin{eqnarray}\label{eq:1}
=\sum\limits_{i=1}^{LM} H( \bar{V}_m^{(i)} | \bar{Y}_e^{(i)})
+ H( \bar{V}_r^{(i)} | \bar{Y}_e^{(i)}, \bar{V}_m^{(i)}).
\end{eqnarray}
Focusing on a particular super block, omitting the index $(l)$ in
$(\bar{\bar{W}}^{(l)},\bar{\bar{Y}}_e^{(l)})$, and using
\eqref{eq:2} and \eqref{eq:3} in \eqref{eq:1}, we obtain

$MN \left( E\left[ [C(W_m) - C(W_e)]^+ \right] - \epsilon_4 \right)
\leq H(\bar{\bar{W}}|\bar{\bar{Y}}_e)$
\begin{eqnarray}
\leq MN \left( E\left[ [C(W_m) - C(W_e)]^+ \right] + \epsilon_5 \right),
\end{eqnarray}
where $\epsilon_4$ and $\epsilon_5$ vanishes as $M,N$ get large.

In order to translate $H(W^*|Y_e^*)$ to R{\'e}nyi entropy,
to use Lemma~\ref{thm:PA} in our problem, we resort
to typical sequences, as for a uniform random variable both
measures are the same.
Considering $(\bar{\bar{W}}^{(1)},\cdots,\bar{\bar{W}}^{(L)},
\bar{\bar{Y}}_e^{(1)},\cdots,\bar{\bar{Y}}_e^{(L)})$ as $L$
repetitions of the experiment of super block random variables
$(\bar{\bar{W}},\bar{\bar{Y}}_e)$, we define the event $T$
based on typical sets as follows~\cite{Maurer:Information00}:
Let $\delta>0$. $T=1$, if the sequences $\bar{\bar{w}}^{(1\cdots L)}$
and $(\bar{\bar{w}}^{(1\cdots L)},\bar{\bar{y}}_e^{(1\cdots L)})$
are $\delta$-typical; and $\bar{\bar{y}}_e^{(1\cdots L)}$ is
such that the probability that $({\bar{\bar{w}}'}^{(1\cdots L)},\bar{\bar{y}}_e^{(1\cdots L)})$
is $\delta$-typical is at least $1-\delta$, which is taken
over ${\bar{\bar{w}}'}^{(1\cdots L)}$ according to
$p({\bar{\bar{W}}'}^{(1\cdots L)}|\bar{\bar{y}}_e^{(1\cdots L)})$.
Otherwise, we set $T=0$ and denote $\delta_0\triangleq \Prob\{T=0\}$.
Then, by Lemma 6 of~\cite{Maurer:Information00}, as $L\to\infty$
\begin{equation}\label{eq:PA1}
L \delta_0 \to 0, L \delta\to 0,  \textrm{ and }
\end{equation}
\begin{eqnarray}\label{eq:PA2}
R(\bar{\bar{W}}^{(1\cdots L)}| \bar{\bar{Y}}_e^{(1\cdots L)}=\bar{\bar{y}}_e^{(1\cdots L)},T=1)\nonumber\\
\geq L (H(\bar{\bar{W}} | \bar{\bar{Y}}_e) - 2\delta)+\log(1-\delta).
\end{eqnarray}
We continue as follows.

$R(\bar{\bar{W}}^{(1\cdots L)}| \bar{\bar{Y}}_e^{(1\cdots L)}=\bar{\bar{y}}_e^{(1\cdots L)},T=1)$
\begin{eqnarray}\label{eq:PA3}
&\geq& L (H(\bar{\bar{W}} | \bar{\bar{Y}}_e) - 2\delta)+\log(1-\delta)\nonumber\\
&\geq& L M N \bigg( E\left[ [C(W_m) - C(W_e)]^+ \right]
- \epsilon_4 \nonumber\\
&&{-}\: \frac{2\delta}{MN} + \frac{\log(1-\delta)}{LMN}\bigg)\nonumber\\
&=& L M N \left( E\left[ [C(W_m) - C(W_e)]^+ \right]
- \delta^* \right),
\end{eqnarray}
where $\delta^* \to 0$ as $M,N\to\infty$.
Thus, for the given $\epsilon^*$, there exists $M_2,N_2$ s.t.
for $M\geq M_2$ and $N\geq N_2$, $\frac{\epsilon^*}{2} \geq \delta^*$. We let
$r=L M N \left( E\left[ [C(W_m) - C(W_e)]^+ \right]
- \epsilon^* \right)$ and consider the following bound.

$H(K|Y_e^*,G) \geq H(K|Y_e^*,G,T)$
\begin{eqnarray}
&\stackrel{(a)}{\geq}& (1-\delta_0) \sum\limits_{y_e^*\in \Yc_e^*}
\bigg(H(K|Y_e^*=y_e^*,G,T=1)\nonumber\\
&&P(Y_e^*=y_e^*|T=1) \bigg)\nonumber
\end{eqnarray}
\begin{eqnarray}
&\stackrel{(b)}{\geq}& (1-\delta_0) \left(r-\frac{2^{-LMN(\epsilon^*-\delta^*)}}{\ln 2}\right), \label{eq:fading3}
\end{eqnarray}
where in (a) $\delta_0$ is s.t. $L \delta_0 \to 0$ as $L\to\infty$, (b)
is due to Lemma~\ref{thm:PA} given above and due to
\eqref{eq:PA3} and the choice of $r$.
Here, for the given $\epsilon>0$, there exists $M_3,N_3$ s.t.
for $M\geq M_3$ and $N\geq N_3$,
$\frac{2^{-LMN(\frac{\epsilon^*}{2})}}{\ln 2}\leq \frac{\epsilon}{2}$.
Hence, we obtain
\begin{eqnarray}
I(K;Y_e^*,G) &=& H(K)-H(K|Y_e^*,G) \\
&\leq& \delta_0 r + \frac{2^{-LMN(\epsilon^*-\delta^*)}}{\ln 2} \\
&\stackrel{(a)}{\leq}& \delta_0 LMN + \frac{2^{-LMN(\frac{\epsilon^*}{2})}}{\ln 2}\\
&\stackrel{(b)}{\leq}& \delta_0 LMN + \frac{\epsilon}{2},\label{eq:fading4}
\end{eqnarray}
where (a) holds if $M\geq M_2$ and $N\geq N_2$ and (b) holds if
$M\geq M_3$ and $N\geq N_3$.

Now, we choose some $M\geq \max\{M_1,M_2,M_3\}$. For this choice of $M$,
we choose sufficiently large $L$ and sufficiently large $N$ such that
$N\geq \max\{N_1,N_2,N_3\}$ and
\begin{eqnarray}
\delta_0 LMN &\leq& \frac{\epsilon}{2} \label{eq:fading5}\\
LM2^{-N^{\beta}} &\leq& \epsilon \label{eq:fading6},
\end{eqnarray}
which holds as $\delta_0 L \to 0$ as $L\to \infty$ in \eqref{eq:PA1}.
(In fact, due to~\cite[Lemma 4 and Lemma 6]{Maurer:Information00},
for any $\epsilon'>0$, we can take
$\delta_0 L \leq \frac{\epsilon'}{L}$ as
$L$ gets large.)
Therefore, for this choice of $L,M,N$,
we obtain the desired result from
\eqref{eq:fading1}, \eqref{eq:fading2}, \eqref{eq:fading4},
due to \eqref{eq:fading5} and \eqref{eq:fading6}:
\begin{eqnarray}
H(W^*) &\geq& L M N \left(E\left[ C(W_m) \right]-\epsilon^*\right) \\
\Prob\{W^*\neq \hat{W}^*\} &\leq& \epsilon \\
I(K;Y_e^*,G) &\leq& \epsilon
\end{eqnarray}
In addition, for this choice of $L,M,N$, we bound
$H(K)\geq r-\epsilon$ due to \eqref{eq:fading3},
which shows that the key is approximately uniform.
\end{proof}

Few remarks are now in order.

1) Existing code designs in the literature and the previous section
of this work assume that Eve's channel is known
at Alice and Bob. In the above scheme,
Alice and Bob only need the statistical knowledge of eavesdropper
CSI. Also, the main channel is not necessarily stronger than the
eavesdropper channel, which is not the case for degraded wiretap settings.

2) The above scheme can be used for the wiretap channel of
Section~\ref{sec:WiretapChannel} by setting $M=0$
to achieve strong secrecy (assuring arbitrarily small information leakage)
instead of the weak notion (making the leakage rate small).
See also~\cite{Maurer:Information00}.

3) The results can be extended to arbitrary binary-input channels
along the same lines, using the result of Section~\ref{sec:WiretapChannel}.
In such a setting, the above theorem would be reformulated
with $n=LMN (E[I(W_m)]-\epsilon^*)$ and
$r=LMN (E[[I(W_m)-I(W_e)]^+]-\epsilon^*)$. However,
the code construction complexity of such channels may not
scale as good as that of the erasure channels~\cite{Arikan:Channel09}.

%%%%%%%%%%%%%%%%%%%%%%%%%%%%%%%%%%%%%%%%%%%%%%%%%%%%%%%%%%%%%%%%%%%%%%%%%%%%%%
%%%%%%%%%%%%%%%%%%%%%%%%%%%%%%%%%%%%%%%%%%%%%%%%%%%%%%%%%%%%%%%%%%%%%%%%%%%%%%

\section{Discussion}
\label{sec:Conclusion}

In this work, we considered polar coding for binary-input DMCs
with a degraded eavesdropper. We showed that polar coding
can be utilized to achieve non-trivial secrecy rates these
set of channels. The results might be extended to arbitrary
discrete memoryless channels using the techniques given
in~\cite{Sasoglu:Polarization}.
The second focus of this work was the secret key agreement
over fading channels, where we showed that Alice and Bob can
create advantage over Eve by using the polar coding scheme at
each fading block, which is then exploited with privacy amplification
techniques to generate keys. This result is interesting in the sense
that part of the key agreement protocol is established information
theoretically over fading channels by only requiring statistical
knowledge of eavesdropper CSI at the users.

%%%%%%%%%%%%%%%%%%%%%%%%%%%%%%%%%%%%%%%%%%%%%%%%%%%%%%%%%%%%%%%%%%%%%%%%%%%%%%
%%%%%%%%%%%%%%%%%%%%%%%%%%%%%%%%%%%%%%%%%%%%%%%%%%%%%%%%%%%%%%%%%%%%%%%%%%%%%%

%%%%%%%%%%%%%%%%%%%%%%%%%%%%%%%%%%%%%%%%%%%%%%%%%%%%%%%%%%%%%%%%%%%%%%%%%%%%%%
%%%%%%%%%%%%%%%%%%%%%%%%%%%%%%%%%%%%%%%%%%%%%%%%%%%%%%%%%%%%%%%%%%%%%%%%%%%%%%

\bibliographystyle{IEEEtran}
%\bibliography{IEEEabrv,D:/OxO/Latex/_References/myref}

\begin{thebibliography}{10}
\providecommand{\url}[1]{#1}
\csname url@samestyle\endcsname
\providecommand{\newblock}{\relax}
\providecommand{\bibinfo}[2]{#2}
\providecommand{\BIBentrySTDinterwordspacing}{\spaceskip=0pt\relax}
\providecommand{\BIBentryALTinterwordstretchfactor}{4}
\providecommand{\BIBentryALTinterwordspacing}{\spaceskip=\fontdimen2\font plus
\BIBentryALTinterwordstretchfactor\fontdimen3\font minus
  \fontdimen4\font\relax}
\providecommand{\BIBforeignlanguage}[2]{{%
\expandafter\ifx\csname l@#1\endcsname\relax
\typeout{** WARNING: IEEEtran.bst: No hyphenation pattern has been}%
\typeout{** loaded for the language `#1'. Using the pattern for}%
\typeout{** the default language instead.}%
\else
\language=\csname l@#1\endcsname
\fi
#2}}
\providecommand{\BIBdecl}{\relax}
\BIBdecl

\bibitem{Shannon:Communication49}
C.~E. Shannon, ``Communication theory of secrecy systems,'' \emph{The Bell
  System Technical Journal}, vol.~28, pp. 656--715, 1949.

\bibitem{Wyner:The75}
A.~Wyner, ``The wire-tap channel,'' \emph{The Bell System Technical Journal},
  vol.~54, no.~8, pp. 1355--1387, Oct. 1975.

\bibitem{Csiszar:Broadcast78}
I.~Csisz{\'a}r and J.~{K\"{o}rner}, ``Broadcast channels with confidential
  messages,'' \emph{{IEEE} Trans. Inf. Theory}, vol.~24, no.~3, pp. 339--348,
  May 1978.

\bibitem{Leung-Yan-Cheong:The78}
S.~Leung-Yan-Cheong and M.~Hellman, ``The gaussian wire-tap channel,''
  \emph{{IEEE} Trans. Inf. Theory}, vol.~24, no.~4, pp. 451--456, Jul. 1978.

\bibitem{Thangaraj:Application07}
A.~Thangaraj, S.~Dihidar, A.~R. Calderbank, S.~W. McLaughlin, and J.-M.
  Merolla, ``Applications of {LDPC} codes to the wiretap channel,''
  \emph{{IEEE} Trans. Inf. Theory}, vol.~53, no.~8, pp. 2933--2945, Aug. 2007.

\bibitem{Liu:Secure07}
R.~Liu, Y.~Liang, H.~V. Poor, and P.~Spasojevic, ``Secure nested codes for type
  {II} wiretap channels,'' in \emph{Proc. {IEEE} Information Theory Workshop
  (ITW'07)}, Sep. 2007.

\bibitem{Liang:Information08}
Y.~Liang, H.~V. Poor, and S.~{Shamai (Shitz)}, ``Information theoretic
  security,'' \emph{Foundations and Trends in Communications and Information
  Theory}, vol.~5, no. 4-5, pp. 355--580, 2008.

\bibitem{Arikan:Channel09}
E.~Ar{\i}kan, ``Channel polarization: A method for constructing
  capacity-achieving codes for symmetric binary-input memoryless channels,''
  \emph{{IEEE} Trans. Inf. Theory}, vol.~55, no.~7, pp. 3051--3073, Jul. 2009.

\bibitem{Hessam:Achieving}
\BIBentryALTinterwordspacing
H.~Mahdavifar and A.~Vardy, ``Achieving the secrecy capacity of wiretap
  channels using polar codes,'' 2010, submitted for publication. [Online].
  Available: \url{http://arxiv.org/abs/1001.0210}
\BIBentrySTDinterwordspacing

\bibitem{Arikan:On09}
E.~Ar{\i}kan and E.~Telatar, ``On the rate of channel polarization,'' in
  \emph{Proc. 2010 IEEE International Symposium on Information Theory}, Seoul,
  Korea, Jun. 2009.

\bibitem{Korada:Polar09}
S.~B. Korada, ``Polar codes for channel and source coding,'' Ph.D.
  dissertation, {Lausanne, Switzerland}, 2009.

\bibitem{Gopala:On08}
P.~Gopala, L.~Lai, and H.~{El Gamal}, ``On the secrecy capacity of fading
  channels,'' \emph{{IEEE} Trans. Inf. Theory}, vol.~54, no.~10, pp.
  4687--4698, Oct. 2008.

\bibitem{Bennett:Generalized95}
C.~H. Bennett, G.~Brassard, C.~Crepeau, and U.~M. Maurer, ``Generalized privacy
  amplification,'' \emph{{IEEE} Trans. Inf. Theory}, vol.~41, no.~6, pp.
  1915--1923, Nov. 1995.

\bibitem{Bloch:WirelessI}
M.~Bloch, J.~Barros, M.~Rodrigues, and S.~W.McLaughlin, ``Wireless
  information-theoretic security,'' \emph{{IEEE} Trans. Inf. Theory}, vol.~54,
  no.~6, pp. 2515--2534, Jun. 2008.

\bibitem{Carter:Universal79}
J.~L. Carter and M.~N. Wegman, ``Universal classes of hash functions,''
  \emph{J. Comput. Syst. Sci.}, vol.~18, pp. 143--154, 1979.

\bibitem{Maurer:Information00}
U.~Maurer and S.~Wolf, ``Information-theoretic key agreement: From weak to
  strong secrecy for free,'' in \emph{Advances in Cryptology - {EUROCRYPT}
  2000, Lecture Notes in Computer Science 1807}, 2000, pp. 351--368.

\bibitem{Sasoglu:Polarization}
\BIBentryALTinterwordspacing
E.~Sasoglu, E.~Ar{\i}kan, and E.~Telatar, ``Polarization for arbitrary discrete
  memoryless channels,'' 2009, submitted for publication. [Online]. Available:
  \url{http://arxiv.org/abs/0908.0302}
\BIBentrySTDinterwordspacing

\end{thebibliography}

% Generated by IEEEtran.bst, version: 1.13 (2008/09/30)

%%%%%%%%%%%%%%%%%%%%%%%%%%%%%%%%%%%%%%%%%%%%%%%%%%%%%%%%%%%%%%%%%%%%%%%%%%%%%%
%%%%%%%%%%%%%%%%%%%%%%%%%%%%%%%%%%%%%%%%%%%%%%%%%%%%%%%%%%%%%%%%%%%%%%%%%%%%%%

\end{document}